\providecommand{\U}[1]{\protect\rule{.1in}{.1in}}
\newtheorem{theorem}{Theorem}
\newtheorem{definition}[theorem]{Definition}
\newtheorem{lemma}[theorem]{Lemma}
\newtheorem{notation}[theorem]{Notation}
\newtheorem{remark}[theorem]{Remark}
\newenvironment{proof}[1][Proof]{\noindent\textbf{#1.} }{\ \rule{0.5em}{0.5em}}
\newdimen\dummy
\begin{document}

\title{On the serial connection of the regular asynchronous systems}
\author{Serban E. Vlad\\Oradea City Hall, Piata Unirii Nr. 1, 410100, Oradea, Romania, serban\_e\_vlad@yahoo.com}
\maketitle

\begin{abstract}
The asynchronous systems $f$ are multi-valued functions, representing the
non-deterministic models of the asynchronous circuits from the digital
electrical engineering. In real time, they map an 'admissible input' function
$u:\mathbf{R}\rightarrow\{0,1\}^{m}$ to a set $f(u)$ of 'possible states'
$x\in f(u),$ where $x:\mathbf{R}\rightarrow\{0,1\}^{n}.$ When $f$ is defined
by making use of a 'generator function' $\Phi:\{0,1\}^{n}\times\{0,1\}^{m}%
\rightarrow\{0,1\}^{n}$, the system is called regular. The usual definition of
the serial connection of systems as composition of multi-valued functions does
not bring the regular systems into regular systems, thus the first issue in
this study is to modify in an acceptable manner the definition of the serial
connection in a way that matches regularity. This intention was expressed for
the first time, without proving the regularity of the serial connection of
systems, in the work \textit{\cite{bib1}}. Our present purpose is to restate
with certain corrections and prove Theorem 45 from that work.

\end{abstract}

\textbf{Keywords:} serial connection, asynchronous system.

\textbf{2010 MSC:} 94C99.

\section{Introduction}

The regular asynchronous systems are the Boolean dynamical systems. They
represent the (real time or discrete time) models of the asynchronous circuits
from digital electrical engineering. In \cite{bib1} we have shown that the
subsystems of the regular systems are regular, the dual systems of the regular
systems are regular, the Cartesian products of the regular systems are
regular, the parallel and the serial connections of the regular systems are
regular, the intersections and the unions of the regular systems are regular.
The result concerning the serial connections given in Theorem 45 from
\cite{bib1} was not proved by that time and our initial purpose was to give
its proof. Reconsidering the problem (in a slightly different approach) showed
that certain corrections were also necessary. The main result is represented
by Theorem \ref{The22}.

\section{Preliminaries}

\begin{notation}
We denote with $\mathbf{B}=\{0,1\}$ the binary Boole algebra, endowed with the
usual laws '$\overline{\;\;}$' complement, ' $\cdot$ '$\,$intersection,
'$\cup$'$~$union and '$\oplus$' exclusive union.
\end{notation}

\begin{definition}
Let $x:\mathbf{R}\rightarrow\mathbf{B}^{n},$ $y:\mathbf{R}\rightarrow
\mathbf{B}^{p}$ be two functions. We define the \textbf{Cartesian product}
$(x,y)$ of $x$ and $y$ by $(x,y):\mathbf{R}\rightarrow\mathbf{B}^{n+p}$,
$\forall i\in\{1,...,n+p\},\forall t\in\mathbf{R},$%
\[
(x,y)_{i}(t)=\left\{
\begin{array}
[c]{c}%
x_{i}(t),if\;i\in\{1,...,n\},\;\;\;\;\;\;\;\;\\
y_{i}(t),if\;i\in\{n+1,...,n+p\}
\end{array}
\right.  .
\]

\end{definition}

\begin{remark}
We use to identify $\mathbf{B}^{n+p}$ and $\mathbf{B}^{n}\times\mathbf{B}%
^{p}.$ This identification gives us the possibility to write: $\forall
t\in\mathbf{R},$%
\begin{equation}
(x,y)(t)=(x(t),y(t)). \label{sc24}%
\end{equation}

\end{remark}

\begin{notation}
We denote with $\chi_{H}:\mathbf{R}\rightarrow\mathbf{B}$ the characteristic
function of the set $H\subset\mathbf{R}:\forall t\in\mathbf{R},$%
\[
\chi_{H}(t)=\left\{
\begin{array}
[c]{c}%
1,if\;t\in H,\\
0,if\;t\notin H
\end{array}
\right.  .
\]

\end{notation}

\begin{notation}
$Seq$ denotes the set of the sequences $t_{0}<t_{1}<t_{2}<...$ of real numbers
that are unbounded from above. The elements of $Seq$ are usually denoted with
$(t_{k}),(t_{k}^{\prime}),...$
\end{notation}

\begin{definition}
The function $x:\mathbf{R}\rightarrow\mathbf{B}^{n}$ is called \textbf{signal}
if $\mu\in\mathbf{B}^{n}$ and $(t_{k})\in Seq$ exist such that%
\begin{equation}
x(t)=\mu\cdot\chi_{(-\infty,t_{0})}(t)\oplus x(t_{0})\cdot\chi_{\lbrack
t_{0},t_{1})}(t)\oplus...\oplus x(t_{k})\cdot\chi_{\lbrack t_{k},t_{k+1}%
)}(t)\oplus... \label{sc29}%
\end{equation}
The set of the signals is denoted by $S^{(n)}.$
\end{definition}

\begin{definition}
The function $\rho:\mathbf{R}\rightarrow\mathbf{B}^{n}$ is called
\textbf{progressive} if $(t_{k})\in Seq$ exists such that%
\begin{equation}
\rho(t)=\rho(t_{0})\cdot\chi_{\{t_{0}\}}(t)\oplus\rho(t_{1})\cdot\chi
_{\{t_{1}\}}(t)\oplus...\oplus\rho(t_{k})\cdot\chi_{\{t_{k}\}}(t)\oplus...
\label{sc30}%
\end{equation}
and $\forall i\in\{1,...,n\},$ the set $\{k|k\in\mathbf{N},\rho_{i}%
(t_{k})=1\}$ is infinite. The set of the progressive functions is denoted with
$P_{n}.$
\end{definition}

\begin{theorem}
a) If $x\in S^{(n)},y\in S^{(p)}$ then $(x,y)\in S^{(n+p)};$

b) If $\rho\in P_{n},\widetilde{\rho}\in P_{p}$ then $(\rho,\widetilde{\rho
})\in P_{n+p}.$
\end{theorem}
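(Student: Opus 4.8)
The plan is to reduce both parts to one device: from the two defining sequences build a common sequence in $Seq$ refining both, and re-express each function on it. For part a), since $x\in S^{(n)}$ there exist $\mu\in\mathbf{B}^{n}$ and $(t_{k})\in Seq$ realizing (\ref{sc29}), and since $y\in S^{(p)}$ there exist $\nu\in\mathbf{B}^{p}$ and $(t_{k}^{\prime})\in Seq$ realizing the analogue. I would form the set-theoretic union $\{t_{k}\}\cup\{t_{k}^{\prime}\}$: because an increasing sequence that is unbounded from above tends to $+\infty$, each of the two sets meets every half-line $(-\infty,M]$ in only finitely many points, hence so does the union, which is moreover unbounded from above. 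Thus the union can be listed as a strictly increasing sequence $t_{0}^{\prime\prime}<t_{1}^{\prime\prime}<\dots$, unbounded from above, so that $(t_{k}^{\prime\prime})\in Seq$. This merging step, together with the verification that it stays in $Seq$, is where I expect the most care to be needed.

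The heart of part a) is that a signal written with $(t_{k})$ remains a signal when written with any finer $(t_{k}^{\prime\prime})\supseteq(t_{k})$ in $Seq$. Indeed each subinterval $[t_{j}^{\prime\prime},t_{j+1}^{\prime\prime})$ contains no point of $(t_{k})$ other than possibly its left endpoint, so it lies inside one of the intervals on which $x$ is constant (either $(-\infty,t_{0})$ or some $[t_{k},t_{k+1})$); hence $x$ is constant on $[t_{j}^{\prime\prime},t_{j+1}^{\prime\prime})$ with value $x(t_{j}^{\prime\prime})$, and $x=\mu$ on $(-\infty,t_{0}^{\prime\prime})\subseteq(-\infty,t_{0})$. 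Rewriting both $x$ and $y$ on the common sequence $(t_{k}^{\prime\prime})$ and applying (\ref{sc24}) coordinatewise, one obtains precisely the signal expansion (\ref{sc29}) for $(x,y)$ with initial value $(\mu,\nu)\in\mathbf{B}^{n+p}$ and sequence $(t_{k}^{\prime\prime})$, so $(x,y)\in S^{(n+p)}$.

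For part b) I would merge the sequences in the same way. If $(t_{k})$ realizes (\ref{sc30}) for $\rho$ and $(t_{k}^{\prime})$ realizes it for $\widetilde{\rho}$, then on the merged sequence $(t_{k}^{\prime\prime})$ both functions vanish off $(t_{k}^{\prime\prime})$, since their supports lie inside $\{t_{k}\}$ and $\{t_{k}^{\prime}\}$ respectively and hence inside $\{t_{k}^{\prime\prime}\}$; therefore $(\rho,\widetilde{\rho})(t)=\bigoplus_{k}(\rho,\widetilde{\rho})(t_{k}^{\prime\prime})\cdot\chi_{\{t_{k}^{\prime\prime}\}}(t)$. It remains to check the infiniteness condition coordinatewise: for $i\in\{1,\dots,n\}$ the $i$-th coordinate of $(\rho,\widetilde{\rho})$ equals $\rho_{i}$, which takes the value $1$ at infinitely many $t_{k}$, and since every such $t_{k}$ appears among the $t_{k}^{\prime\prime}$ the set $\{k\mid(\rho,\widetilde{\rho})_{i}(t_{k}^{\prime\prime})=1\}$ remains infinite; the coordinates $i\in\{n+1,\dots,n+p\}$ are treated identically via $\widetilde{\rho}$. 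This yields $(\rho,\widetilde{\rho})\in P_{n+p}$ and completes the argument.
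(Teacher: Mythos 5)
Your proof is correct and takes essentially the same route as the paper's: merge the two defining sequences into a common refinement in $Seq$ and rewrite both functions on that common sequence before forming the Cartesian product and checking the required conditions coordinatewise. In fact you supply more detail than the paper does, which proves only part b) and simply asserts the rewriting step, whereas you justify both that the merged set can be enumerated as a sequence in $Seq$ and that each function keeps its canonical form on the finer sequence, and you also treat part a) explicitly.
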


\begin{proof}
b) We take arbitrarily $\rho\in P_{n},\widetilde{\rho}\in P_{p}$ for which
$\forall t\in\mathbf{R},$%
\begin{equation}
\rho(t)=\rho(t_{0}^{\prime})\cdot\chi_{\{t_{0}^{\prime}\}}(t)\oplus\rho
(t_{1}^{\prime})\cdot\chi_{\{t_{1}^{\prime}\}}(t)\oplus...\oplus\rho
(t_{k}^{\prime})\cdot\chi_{\{t_{k}^{\prime}\}}(t)\oplus... \label{sc31}%
\end{equation}%
\begin{equation}
\widetilde{\rho}(t)=\widetilde{\rho}(t_{0}^{\prime\prime})\cdot\chi
_{\{t_{0}^{\prime\prime}\}}(t)\oplus\widetilde{\rho}(t_{1}^{\prime\prime
})\cdot\chi_{\{t_{1}^{\prime\prime}\}}(t)\oplus...\oplus\widetilde{\rho}%
(t_{k}^{\prime\prime})\cdot\chi_{\{t_{k}^{\prime\prime}\}}(t)\oplus...
\label{sc32}%
\end{equation}
with $(t_{k}^{\prime}),(t_{k}^{\prime\prime})\in Seq.$ We denote by
$(t_{k})\in Seq$ the sequence obtained by indexing increasingly the elements
of the set $\{t_{k}^{\prime}|k\in\mathbf{N}\}\cup\{t_{k}^{\prime\prime}%
|k\in\mathbf{N}\}.$ Equations (\ref{sc31}), (\ref{sc32}) may be rewritten
under the form%
\begin{equation}
\rho(t)=\rho(t_{0})\cdot\chi_{\{t_{0}\}}(t)\oplus\rho(t_{1})\cdot\chi
_{\{t_{1}\}}(t)\oplus...\oplus\rho(t_{k})\cdot\chi_{\{t_{k}\}}(t)\oplus...
\end{equation}%
\begin{equation}
\widetilde{\rho}(t)=\widetilde{\rho}(t_{0})\cdot\chi_{\{t_{0}\}}%
(t)\oplus\widetilde{\rho}(t_{1})\cdot\chi_{\{t_{1}\}}(t)\oplus...\oplus
\widetilde{\rho}(t_{k})\cdot\chi_{\{t_{k}\}}(t)\oplus...
\end{equation}
and we get%
\begin{equation}
(\rho,\widetilde{\rho})(t)=(\rho(t_{0}),\widetilde{\rho}(t_{0}))\cdot
\chi_{\{t_{0}\}}(t)\oplus(\rho(t_{1}),\widetilde{\rho}(t_{1}))\cdot
\chi_{\{t_{1}\}}(t)\oplus... \label{sc1_}%
\end{equation}%
\[
...\oplus(\rho(t_{k}),\widetilde{\rho}(t_{k}))\cdot\chi_{\{t_{k}\}}%
(t)\oplus...
\]
The sets%
\[
\{k|k\in\mathbf{N},(\rho,\widetilde{\rho})_{i}(t_{k})=1\}=\{k|k\in
\mathbf{N},\rho_{i}(t_{k})=1\},i=\overline{1,n},
\]%
\[
\{k|k\in\mathbf{N},(\rho,\widetilde{\rho})_{i}(t_{k})=1\}=\{k|k\in
\mathbf{N},\widetilde{\rho}_{i}(t_{k})=1\},i=\overline{n+1,n+p}%
\]
are infinite. We conclude that $(\rho,\widetilde{\rho})\in P_{n+p}.$
\end{proof}

\section{Regular systems}

\begin{notation}
$P^{\ast}(H)$ is the notation of the non-empty subsets of $H.$ In this paper
$H\in\{\mathbf{B}^{n},S^{(n)},P_{n}\}.$
\end{notation}

\begin{definition}
A function $f:U\rightarrow P^{\ast}(S^{(n)}),U\in P^{\ast}(S^{(m)})$ is called
(\textbf{asynchronous}) \textbf{system}. Any $u\in U$ is called
(\textbf{admissible}) \textbf{input} and any $x\in f(u)$ is called
(\textbf{possible}) \textbf{state}.
\end{definition}

\begin{definition}
We consider the function $\Phi:\mathbf{B}^{n}\times\mathbf{B}^{m}%
\rightarrow\mathbf{B}^{n}$. For $\nu\in\mathbf{B}^{n}$, we define $\Phi^{\nu
}:\mathbf{B}^{n}\times\mathbf{B}^{m}\rightarrow\mathbf{B}^{n}$ by $\forall
\mu\in\mathbf{B}^{n},\forall\lambda\in\mathbf{B}^{m},$%
\[
\Phi^{\nu}(\mu,\lambda)=(\overline{\nu_{1}}\cdot\mu_{1}\oplus\nu_{1}\cdot
\Phi_{1}(\mu,\lambda),...,\overline{\nu_{n}}\cdot\mu_{n}\oplus\nu_{n}\cdot
\Phi_{n}(\mu,\lambda)).
\]

\end{definition}

\begin{definition}
Let be $\mu\in\mathbf{B}^{n},U\in P^{\ast}(S^{(m)}),u\in U$ and $\rho\in
P_{n},$%
\[
u(t)=\lambda\cdot\chi_{(-\infty,t_{0}^{\prime})}(t)\oplus u(t_{0}^{\prime
})\cdot\chi_{\lbrack t_{0}^{\prime},t_{1}^{\prime})}(t)\oplus...\oplus
u(t_{k}^{\prime})\cdot\chi_{\lbrack t_{k}^{\prime},t_{k+1}^{\prime})}%
(t)\oplus...
\]%
\[
\rho(t)=\rho(t_{0}^{\prime\prime})\cdot\chi_{\{t_{0}^{\prime\prime}%
\}}(t)\oplus\rho(t_{1}^{\prime\prime})\cdot\chi_{\{t_{1}^{\prime\prime}%
\}}(t)\oplus...\oplus\rho(t_{k}^{\prime\prime})\cdot\chi_{\{t_{k}%
^{\prime\prime}\}}(t)\oplus...
\]
with $\lambda\in\mathbf{B}^{m}$ and $(t_{k}^{\prime}),(t_{k}^{\prime\prime
})\in Seq.$ The \textbf{orbit} $\Phi^{\rho}(\mu,u,\cdot)\in S^{(n)}$ is
defined like this. We denote by $(t_{k})\in Seq$ the elements of the set
$\{t_{k}^{\prime}|k\in\mathbf{N}\}\cup\{t_{k}^{\prime\prime}|k\in\mathbf{N}\}$
indexed increasingly. Then $\forall t\in\mathbf{R},$%
\[
\Phi^{\rho}(\mu,u,t)=\mu\cdot\chi_{(-\infty,t_{0})}(t)\oplus\omega_{0}%
\cdot\chi_{\lbrack t_{0},t_{1})}(t)\oplus...\oplus\omega_{k}\cdot\chi_{\lbrack
t_{k},t_{k+1})}(t)\oplus...
\]
where%
\[
\omega_{0}=\Phi^{\rho(t_{0})}(\mu,u(t_{0})),
\]%
\[
\omega_{k+1}=\Phi^{\rho(t_{k+1})}(\omega_{k},u(t_{k+1})),k\in\mathbf{N.}%
\]

\end{definition}

\begin{definition}
$f$ is called \textbf{regular asynchronous system} if $\Phi$ and the functions
$i_{f}:U\rightarrow P^{\ast}(\mathbf{B}^{n}),$ $\pi_{f}:\Delta_{f}\rightarrow
P^{\ast}(P_{n})$ exist,%
\[
\Delta_{f}=\{(\mu,u)|u\in U,\mu\in i_{f}(u)\}
\]
such that $\forall u\in U,$%
\[
f(u)=\{\Phi^{\rho}(\mu,u,\cdot)|\mu\in i_{f}(u),\rho\in\pi_{f}(\mu,u)\}.
\]
The functions $i_{f},\pi_{f}$ are called the \textbf{initial state function},
respectively the \textbf{computation function} of $f$. $\Phi$ is called the
\textbf{generator function} of $f$; we also say that $f$ \textbf{is generated}
by $\Phi.$
\end{definition}

\section{The serial connection of the regular systems}

\begin{remark}
Let be the systems $f:U\rightarrow P^{\ast}(S^{(n)}),U\in P^{\ast}(S^{(m)}) $
and $h:X\rightarrow P^{\ast}(S^{(p)}),X\in P^{\ast}(S^{(n)})$ such that
$\forall u\in U,f(u)\subset X.$ In general the serial connection $h\circ f$ of
$h$ and $f$ is defined by $h\circ f:U\rightarrow P^{\ast}(S^{(p)}),\forall
u\in U,(h\circ f)(u)=\underset{x\in f(u)}{%
{\displaystyle\bigcup}
}h(x)=\{y|x\in f(u),y\in h(x)\}$ (the composition of the multi-valued
functions). This definition does not match regularity, thus we are forced to
adopt the following definition of the serial connection, that is still acceptable.
\end{remark}

\begin{definition}
We define the \textbf{serial connection} $h\ast f$ of $h$ and $f$ by $h\ast
f:U\rightarrow P^{\ast}(S^{(n+p)}),$ $\forall u\in U,$%
\begin{equation}
(h\ast f)(u)=\{(x,y)|x\in f(u),y\in h(x)\}. \label{sc6}%
\end{equation}

\end{definition}

\begin{definition}
For the functions $\Phi:\mathbf{B}^{n}\times\mathbf{B}^{m}\rightarrow
\mathbf{B}^{n}$ and $\Psi:\mathbf{B}^{p}\times\mathbf{B}^{n}\rightarrow
\mathbf{B}^{p}$ we define $\Psi\ast\Phi:\mathbf{B}^{n+p}\times\mathbf{B}%
^{m}\rightarrow\mathbf{B}^{n+p}$ by $\forall((\mu,\delta),\lambda
)\in(\mathbf{B}^{n}\times\mathbf{B}^{p})\times\mathbf{B}^{m},$%
\begin{equation}
(\Psi\ast\Phi)((\mu,\delta),\lambda)=(\Phi(\mu,\lambda),\Psi(\delta,\Phi
(\mu,\lambda))). \label{sc25}%
\end{equation}
In the previous equation we have identified $\mathbf{B}^{n+p}$ with
$\mathbf{B}^{n}\times\mathbf{B}^{p}.$
\end{definition}

\begin{definition}
For any $\nu\in\mathbf{B}^{n},\widetilde{\nu}\in\mathbf{B}^{p}$ and $\Phi
,\Psi$ like previously, we define $(\Psi\ast\Phi)^{(\nu,\widetilde{\nu}%
)}:\mathbf{B}^{n+p}\times\mathbf{B}^{m}\rightarrow\mathbf{B}^{n+p}$ in the
following manner:%
\begin{equation}
(\Psi\ast\Phi)^{(\nu,\widetilde{\nu})}=\Psi^{\widetilde{\nu}}\ast\Phi^{\nu}.
\label{sc26}%
\end{equation}

\end{definition}

\begin{lemma}
\label{The6}We presume that $\Phi:\mathbf{B}^{n}\times\mathbf{B}%
^{m}\rightarrow\mathbf{B}^{n},$ $\Psi:\mathbf{B}^{p}\times\mathbf{B}%
^{n}\rightarrow\mathbf{B}^{p}$ as well as $\mu\in\mathbf{B}^{n},\delta
\in\mathbf{B}^{p},u\in U,\rho\in P_{n},\widetilde{\rho}\in P_{p}$ are given.
The following formula is true: $\forall t\in\mathbf{R},$%
\[
(\Phi^{\rho}(\mu,u,t),\Psi^{\widetilde{\rho}}(\delta,\Phi^{\rho}(\mu
,u,\cdot),t))=(\Psi\ast\Phi)^{(\rho,\widetilde{\rho})}((\mu,\delta),u,t).
\]

\end{lemma}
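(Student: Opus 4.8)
The plan is to evaluate both sides on a single sequence $(t_{k})\in Seq$ that simultaneously resolves the input and both progressive functions, and then to match the two resulting step-functions interval by interval via an induction.

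First I would fix representations of $u,\rho,\widetilde{\rho}$ as in the orbit definition and let $(t_{k})\in Seq$ be the increasing indexing of the union of the jump set of $u$ with the supports of $\rho$ and of $\widetilde{\rho}$. The reason for merging all three at once is that the orbit is insensitive to refining its defining sequence: at any instant where a progressive function vanishes, the corresponding map $\Phi^{0}$ (respectively $\Psi^{0}$) acts as the identity on the state, so inserting such instants leaves the orbit unchanged. I would record this refinement invariance explicitly, since it is what licenses computing $\Phi^{\rho}(\mu,u,\cdot)$, the outer orbit $\Psi^{\widetilde{\rho}}(\delta,\Phi^{\rho}(\mu,u,\cdot),\cdot)$, and the combined orbit $(\Psi\ast\Phi)^{(\rho,\widetilde{\rho})}((\mu,\delta),u,\cdot)$ all from the same $(t_{k})$. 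Crucially, $x:=\Phi^{\rho}(\mu,u,\cdot)$ is constant on each $[t_{k},t_{k+1})$, so its jump set is contained in $\{t_{k}\}$ and $(t_{k})$ is an admissible sequence for the outer orbit as well, with $x(t_{k})$ equal to the $k$-th state value of $x$.

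Next I would write out the three step-sequences. For $x$ let $\omega_{0}=\Phi^{\rho(t_{0})}(\mu,u(t_{0}))$ and $\omega_{k+1}=\Phi^{\rho(t_{k+1})}(\omega_{k},u(t_{k+1}))$; for $y:=\Psi^{\widetilde{\rho}}(\delta,x,\cdot)$ let $\zeta_{0}=\Psi^{\widetilde{\rho}(t_{0})}(\delta,\omega_{0})$ and $\zeta_{k+1}=\Psi^{\widetilde{\rho}(t_{k+1})}(\zeta_{k},\omega_{k+1})$, using $x(t_{k})=\omega_{k}$; and for the combined orbit let $\Omega_{0}=(\Psi\ast\Phi)^{(\rho(t_{0}),\widetilde{\rho}(t_{0}))}((\mu,\delta),u(t_{0}))$ and $\Omega_{k+1}=(\Psi\ast\Phi)^{(\rho(t_{k+1}),\widetilde{\rho}(t_{k+1}))}(\Omega_{k},u(t_{k+1}))$. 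The heart of the argument is the identity $(\Psi\ast\Phi)^{(\nu,\widetilde{\nu})}=\Psi^{\widetilde{\nu}}\ast\Phi^{\nu}$ from (\ref{sc26}) together with (\ref{sc25}), which unfolds one combined step into $(\Psi\ast\Phi)^{(\nu,\widetilde{\nu})}((a,b),\lambda)=(\Phi^{\nu}(a,\lambda),\Psi^{\widetilde{\nu}}(b,\Phi^{\nu}(a,\lambda)))$. I would then prove $\Omega_{k}=(\omega_{k},\zeta_{k})$ by induction on $k$: the base case substitutes $a=\mu$, $b=\delta$, $\lambda=u(t_{0})$ and reads off $\omega_{0}$ in the first slot and $\zeta_{0}$ in the second; the inductive step substitutes $a=\omega_{k}$, $b=\zeta_{k}$, $\lambda=u(t_{k+1})$ and uses that the first slot reproduces $\omega_{k+1}$, which is exactly the state fed into $\Psi^{\widetilde{\rho}(t_{k+1})}$ to produce $\zeta_{k+1}$ in the second slot.

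Finally I would assemble the conclusion: on $(-\infty,t_{0})$ both sides equal the initial value $(\mu,\delta)$, and on each $[t_{k},t_{k+1})$ the left-hand side is $(x,y)(t)=(\omega_{k},\zeta_{k})$ by (\ref{sc24}) while the right-hand side is $\Omega_{k}$, so the two coincide by the induction. I expect the main obstacle to be the bookkeeping around the shared sequence -- specifically, verifying that passing to the common refinement $(t_{k})$ alters none of the three orbits -- rather than the algebraic unfolding, which is forced once (\ref{sc25}) and (\ref{sc26}) are in hand.
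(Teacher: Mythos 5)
Your proposal is correct and follows essentially the same route as the paper's proof: merge the three defining sequences into one common $(t_{k})$, write out the step recursions for the inner orbit ($\omega_{k}$), the outer orbit (your $\zeta_{k}$, the paper's $\gamma_{k}$), and the combined orbit (your $\Omega_{k}$, the paper's $\theta_{k}$), then show $\Omega_{k}=(\omega_{k},\zeta_{k})$ by induction using (\ref{sc25}), (\ref{sc26}) and (\ref{sc24}). Your explicit remark on refinement invariance of the orbits under the common sequence is a point the paper uses only implicitly, but it does not change the argument.
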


\begin{proof}
Let be $u\in U,$%
\begin{equation}
u(t)=u(-\infty+0)\cdot\chi_{(-\infty,t_{0}^{\prime})}(t)\oplus u(t_{0}%
^{\prime})\cdot\chi_{\lbrack t_{0}^{\prime},t_{1}^{\prime})}(t)\oplus...
\label{sc10}%
\end{equation}%
\[
...\oplus u(t_{k}^{\prime})\cdot\chi_{\lbrack t_{k}^{\prime},t_{k+1}^{\prime
})}(t)\oplus...
\]
together with the functions $\rho\in P_{n},\widetilde{\rho}\in P_{p},$%
\begin{equation}
\rho(t)=\rho(t_{0}^{\prime\prime})\cdot\chi_{\{t_{0}^{\prime\prime}%
\}}(t)\oplus\rho(t_{1}^{\prime\prime})\cdot\chi_{\{t_{1}^{\prime\prime}%
\}}(t)\oplus...\oplus\rho(t_{k}^{\prime\prime})\cdot\chi_{\{t_{k}%
^{\prime\prime}\}}(t)\oplus... \label{sc11}%
\end{equation}%
\begin{equation}
\widetilde{\rho}(t)=\widetilde{\rho}(t_{0}^{^{\prime\prime\prime}})\cdot
\chi_{\{t_{0}^{^{\prime\prime\prime}}\}}(t)\oplus\widetilde{\rho}%
(t_{1}^{^{\prime\prime\prime}})\cdot\chi_{\{t_{1}^{^{\prime\prime\prime}}%
\}}(t)\oplus...\oplus\widetilde{\rho}(t_{k}^{^{\prime\prime\prime}})\cdot
\chi_{\{t_{k}^{^{\prime\prime\prime}}\}}(t)\oplus... \label{sc12}%
\end{equation}
where $(t_{k}^{\prime}),(t_{k}^{\prime\prime}),(t_{k}^{\prime\prime\prime})\in
Seq.$ We denote with $(t_{k})\in Seq$ the sequence that is obtained by
indexing increasingly the elements of the set $\{t_{k}^{\prime}|k\in
\mathbf{N}\}\cup\{t_{k}^{^{\prime\prime}}|k\in\mathbf{N}\}\cup\{t_{k}%
^{^{\prime\prime\prime}}|k\in\mathbf{N}\},$ for which (\ref{sc10}),
(\ref{sc11}), (\ref{sc12}) may be rewriten under the form%
\begin{equation}
u(t)=u(-\infty+0)\cdot\chi_{(-\infty,t_{0})}(t)\oplus u(t_{0})\cdot
\chi_{\lbrack t_{0},t_{1})}(t)\oplus...
\end{equation}%
\[
...\oplus u(t_{k})\cdot\chi_{\lbrack t_{k},t_{k+1})}(t)\oplus...
\]%
\begin{equation}
\rho(t)=\rho(t_{0})\cdot\chi_{\{t_{0}\}}(t)\oplus\rho(t_{1})\cdot\chi
_{\{t_{1}\}}(t)\oplus...\oplus\rho(t_{k})\cdot\chi_{\{t_{k}\}}(t)\oplus...
\label{sc15}%
\end{equation}%
\begin{equation}
\widetilde{\rho}(t)=\widetilde{\rho}(t_{0})\cdot\chi_{\{t_{0}\}}%
(t)\oplus\widetilde{\rho}(t_{1})\cdot\chi_{\{t_{1}\}}(t)\oplus...\oplus
\widetilde{\rho}(t_{k})\cdot\chi_{\{t_{k}\}}(t)\oplus... \label{sc16}%
\end{equation}
We have%
\begin{equation}
\Phi^{\rho}(\mu,u,t)=\mu\cdot\chi_{(-\infty,t_{0})}(t)\oplus\omega_{0}%
\cdot\chi_{\lbrack t_{0},t_{1})}(t)\oplus...\oplus\omega_{k}\cdot\chi_{\lbrack
t_{k},t_{k+1})}(t)\oplus... \label{sc20}%
\end{equation}
where%
\begin{equation}
\omega_{0}=\Phi^{\rho(t_{0})}(\mu,u(t_{0})), \label{sc22}%
\end{equation}%
\begin{equation}
\omega_{k+1}=\Phi^{\rho(t_{k+1})}(\omega_{k},u(t_{k+1})),k\in\mathbf{N}
\label{sc13}%
\end{equation}
and furthermore%
\begin{equation}
\Psi^{\widetilde{\rho}}(\delta,\Phi^{\rho}(\mu,u,\cdot),t)=\delta\cdot
\chi_{(-\infty,t_{0})}(t)\oplus\gamma_{0}\cdot\chi_{\lbrack t_{0},t_{1}%
)}(t)\oplus... \label{sc21}%
\end{equation}%
\[
...\oplus\gamma_{k}\cdot\chi_{\lbrack t_{k},t_{k+1})}(t)\oplus...
\]
where%
\begin{equation}
\gamma_{0}=\Psi^{\widetilde{\rho}(t_{0})}(\delta,\omega_{0}), \label{sc23}%
\end{equation}%
\begin{equation}
\gamma_{k+1}=\Psi^{\widetilde{\rho}(t_{k+1})}(\gamma_{k},\omega_{k+1}%
),k\in\mathbf{N.} \label{sc14}%
\end{equation}
We conclude from (\ref{sc20}),(\ref{sc21}) that%
\begin{equation}
(\Phi^{\rho}(\mu,u,t),\Psi^{\widetilde{\rho}}(\delta,\Phi^{\rho}(\mu
,u,\cdot),t))=(\mu,\delta)\cdot\chi_{(-\infty,t_{0})}(t)\oplus\label{sc18}%
\end{equation}%
\[
\oplus(\omega_{0},\gamma_{0})\cdot\chi_{\lbrack t_{0},t_{1})}(t)\oplus
...\oplus(\omega_{k},\gamma_{k})\cdot\chi_{\lbrack t_{k},t_{k+1})}%
(t)\oplus...
\]
On the other hand%
\begin{equation}
(\rho,\widetilde{\rho})(t)=(\rho(t_{0}),\widetilde{\rho}(t_{0}))\cdot
\chi_{\{t_{0}\}}(t)\oplus(\rho(t_{1}),\widetilde{\rho}(t_{1}))\cdot
\chi_{\{t_{1}\}}(t)\oplus...
\end{equation}%
\[
...\oplus(\rho(t_{k}),\widetilde{\rho}(t_{k}))\cdot\chi_{\{t_{k}\}}%
(t)\oplus...
\]
thus%
\begin{equation}
(\Psi\ast\Phi)^{(\rho,\widetilde{\rho})}((\mu,\delta),u,t)=(\mu,\delta
)\cdot\chi_{(-\infty,t_{0})}(t)\oplus\theta_{0}\cdot\chi_{\lbrack t_{0}%
,t_{1})}(t)\oplus... \label{sc19}%
\end{equation}%
\[
...\oplus\theta_{k}\cdot\chi_{\lbrack t_{k},t_{k+1})}(t)\oplus...
\]
is true with%
\begin{equation}
\theta_{0}=(\Psi\ast\Phi)^{(\rho,\widetilde{\rho})(t_{0})}((\mu,\delta
),u(t_{0})), \label{sc27}%
\end{equation}%
\begin{equation}
\theta_{k+1}=(\Psi\ast\Phi)^{(\rho,\widetilde{\rho})(t_{k+1})}(\theta
_{k},u(t_{k+1})),k\in\mathbf{N.} \label{sc28}%
\end{equation}
We prove by induction on $k$ that
\begin{equation}
\theta_{k}=(\omega_{k},\gamma_{k}),k\in\mathbf{N}. \label{sc17}%
\end{equation}
We have%
\[
\theta_{0}\overset{(\ref{sc27})}{=}(\Psi\ast\Phi)^{(\rho,\widetilde{\rho
})(t_{0})}((\mu,\delta),u(t_{0}))\overset{(\ref{sc24})}{=}(\Psi\ast
\Phi)^{(\rho(t_{0}),\widetilde{\rho}(t_{0}))}((\mu,\delta),u(t_{0}))
\]%
\[
\overset{(\ref{sc26})}{=}(\Psi^{\widetilde{\rho}(t_{0})}\ast\Phi^{\rho(t_{0}%
)})((\mu,\delta),u(t_{0}))
\]%
\[
\overset{(\ref{sc25})}{=}(\Phi^{\rho(t_{0})}(\mu,u(t_{0})),\Psi^{\widetilde
{\rho}(t_{0})}(\delta,\Phi^{\rho(t_{0})}(\mu,u(t_{0}))))
\]%
\[
\overset{(\ref{sc22})}{=}(\omega_{0},\Psi^{\widetilde{\rho}(t_{0})}%
(\delta,\omega_{0}))\overset{(\ref{sc23})}{=}(\omega_{0},\gamma_{0})
\]
thus (\ref{sc17}) is true for $k=0.$ We presume that it is true for $k$ and we
prove it for $k+1:$%
\[
\theta_{k+1}\overset{(\ref{sc28})}{=}(\Psi\ast\Phi)^{(\rho,\widetilde{\rho
})(t_{k+1})}(\theta_{k},u(t_{k+1}))\overset{(\ref{sc24})}{=}(\Psi\ast
\Phi)^{(\rho(t_{k+1}),\widetilde{\rho}(t_{k+1}))}(\theta_{k},u(t_{k+1}))
\]%
\[
\overset{(\ref{sc26})}{=}(\Psi^{\widetilde{\rho}(t_{k+1})}\ast\Phi
^{\rho(t_{k+1})})(\theta_{k},u(t_{k+1}))\overset{hyp}{=}(\Psi^{\widetilde
{\rho}(t_{k+1})}\ast\Phi^{\rho(t_{k+1})})((\omega_{k},\gamma_{k}),u(t_{k+1}))
\]%
\[
\overset{(\ref{sc25})}{=}(\Phi^{\rho(t_{k+1})}(\omega_{k},u(t_{k+1}%
)),\Psi^{\widetilde{\rho}(t_{k+1})}(\gamma_{k},\Phi^{\rho(t_{k+1})}(\omega
_{k},u(t_{k+1}))))
\]%
\[
\overset{(\ref{sc13})}{=}(\omega_{k+1},\Psi^{\widetilde{\rho}(t_{k+1})}%
(\gamma_{k},\omega_{k+1}))\overset{(\ref{sc14})}{=}(\omega_{k+1},\gamma
_{k+1}).
\]
Equation (\ref{sc17}) is proved and its truth shows, from (\ref{sc18}) and
(\ref{sc19}), the validity of the statement of the Lemma.
\end{proof}

\begin{definition}
The functions $i_{f}:U\rightarrow P^{\ast}(\mathbf{B}^{n})$ and $\pi
_{f}:\Delta_{f}\rightarrow P^{\ast}(P_{n})$ are given,%
\[
\Delta_{f}=\{(\mu,u)|u\in U,\mu\in i_{f}(u)\}
\]
such that $\forall u\in U,$%
\begin{equation}
f(u)=\{\Phi^{\rho}(\mu,u,\cdot)|\mu\in i_{f}(u),\rho\in\pi_{f}(\mu,u)\}
\label{sc7}%
\end{equation}
and similarly the functions $i_{h}:X\rightarrow P^{\ast}(\mathbf{B}^{p})$ and
$\pi_{h}:\Delta_{h}\rightarrow P^{\ast}(P_{p})$ are given,%
\[
\Delta_{h}=\{(\delta,x)|x\in X,\delta\in i_{h}(x)\}
\]
such that $\forall x\in X,$%
\begin{equation}
h(x)=\{\Psi^{\widetilde{\rho}}(\delta,x,\cdot)|\delta\in i_{h}(x),\widetilde
{\rho}\in\pi_{h}(\delta,x)\}. \label{sc8}%
\end{equation}
We presume that $\forall u\in U,f(u)\subset X.$ We define $i:U\rightarrow
P^{\ast}(\mathbf{B}^{n+p})$ by $\forall u\in U,$%
\begin{equation}
i(u)=\{(\mu,\delta)|\mu\in i_{f}(u),\exists\rho^{\prime}\in\pi_{f}%
(\mu,u),\delta\in i_{h}(\Phi^{\rho^{\prime}}(\mu,u,\cdot))\} \label{sc4}%
\end{equation}
and $\pi:\Delta\rightarrow P^{\ast}(P_{n+p})$ respectively by%
\[
\Delta=\{((\mu,\delta),u)|u\in U,\mu\in i_{f}(u),\exists\rho\in\pi_{f}%
(\mu,u),\delta\in i_{h}(\Phi^{\rho}(\mu,u,\cdot))\},
\]
$\forall((\mu,\delta),u)\in\Delta,$%
\begin{equation}
\pi((\mu,\delta),u)= \label{sc5}%
\end{equation}%
\[
=\{(\rho,\widetilde{\rho})|\rho\in\pi_{f}(\mu,u),\delta\in i_{h}(\Phi^{\rho
}(\mu,u,\cdot)),\widetilde{\rho}\in\pi_{h}(\delta,\Phi^{\rho}(\mu
,u,\cdot))\}.
\]

\end{definition}

\begin{lemma}
\label{The8}The following equality holds for $u\in U:$%
\[
\{(\Psi\ast\Phi)^{(\rho,\widetilde{\rho})}((\mu,\delta),u,\cdot)|\mu\in
i_{f}(u),\rho\in\pi_{f}(\mu,u),
\]%
\[
\delta\in i_{h}(\Phi^{\rho}(\mu,u,\cdot)),\widetilde{\rho}\in\pi_{h}%
(\delta,\Phi^{\rho}(\mu,u,\cdot))\}
\]%
\[
=\{(\Psi\ast\Phi)^{(\rho,\widetilde{\rho})}((\mu,\delta),u,\cdot)|(\mu
,\delta)\in i(u),(\rho,\widetilde{\rho})\in\pi((\mu,\delta),u)\}.
\]

\end{lemma}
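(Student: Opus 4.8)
The plan is to establish the set equality by double inclusion, after noting that both sides are the image of one and the same map $(\mu,\delta,\rho,\widetilde{\rho})\mapsto(\Psi\ast\Phi)^{(\rho,\widetilde{\rho})}((\mu,\delta),u,\cdot)$ applied to two index sets of quadruples; it therefore suffices to prove that the two collections of admissible quadruples coincide. First I would unfold the defining conditions. The left-hand side is indexed by the quadruples with $\mu\in i_{f}(u)$, $\rho\in\pi_{f}(\mu,u)$, $\delta\in i_{h}(\Phi^{\rho}(\mu,u,\cdot))$ and $\widetilde{\rho}\in\pi_{h}(\delta,\Phi^{\rho}(\mu,u,\cdot))$. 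The right-hand side is indexed by the quadruples with $(\mu,\delta)\in i(u)$ and $(\rho,\widetilde{\rho})\in\pi((\mu,\delta),u)$; by (\ref{sc4}) and (\ref{sc5}) these unfold to $\mu\in i_{f}(u)$, the existential clause $\exists\rho^{\prime}\in\pi_{f}(\mu,u)$ with $\delta\in i_{h}(\Phi^{\rho^{\prime}}(\mu,u,\cdot))$, together with $\rho\in\pi_{f}(\mu,u)$, $\delta\in i_{h}(\Phi^{\rho}(\mu,u,\cdot))$ and $\widetilde{\rho}\in\pi_{h}(\delta,\Phi^{\rho}(\mu,u,\cdot))$.

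For the inclusion of the left-hand side into the right-hand side, I would take a quadruple satisfying the left conditions and verify $(\mu,\delta)\in i(u)$ by choosing the witness $\rho^{\prime}=\rho$ in (\ref{sc4}): indeed $\rho\in\pi_{f}(\mu,u)$ and $\delta\in i_{h}(\Phi^{\rho}(\mu,u,\cdot))$ are already at hand. The three conditions defining $\pi((\mu,\delta),u)$ in (\ref{sc5}) then coincide verbatim with the remaining left-hand conditions, so the same quadruple indexes the right-hand side. For the reverse inclusion I would simply read off from $(\rho,\widetilde{\rho})\in\pi((\mu,\delta),u)$ the three conditions $\rho\in\pi_{f}(\mu,u)$, $\delta\in i_{h}(\Phi^{\rho}(\mu,u,\cdot))$, $\widetilde{\rho}\in\pi_{h}(\delta,\Phi^{\rho}(\mu,u,\cdot))$, and from $(\mu,\delta)\in i(u)$ the condition $\mu\in i_{f}(u)$; these are exactly the left-hand conditions, the existential clause of (\ref{sc4}) being simply discarded.

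The one point deserving attention is the existential quantifier $\exists\rho^{\prime}$ in the definition (\ref{sc4}) of $i(u)$: a priori the witness $\rho^{\prime}$ need not coincide with the $\rho$ occurring in $\pi((\mu,\delta),u)$, and this decoupling is what distinguishes the statement from a pure tautology. It resolves immediately, however, because the clause $\delta\in i_{h}(\Phi^{\rho}(\mu,u,\cdot))$ built into (\ref{sc5}) already exhibits the active $\rho$ itself as an admissible witness $\rho^{\prime}$; hence the existential clause of $i(u)$ is automatically satisfied and contributes nothing once $\pi$ has pinned down a specific $\rho$. I therefore expect no genuine obstacle: the lemma is a bookkeeping identity asserting precisely that the auxiliary existential in $i(u)$ becomes redundant in the presence of the $\pi$-conditions, so that the equality of the two orbit sets follows from the equality of their index sets.
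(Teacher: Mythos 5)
Your proof is correct and follows essentially the same route as the paper's: a double inclusion in which the inclusion of the left-hand set into the right-hand set is secured by taking the existential witness $\rho^{\prime}=\rho$ in (\ref{sc4}), and the reverse inclusion by reading off the conditions of (\ref{sc4}) and (\ref{sc5}) and discarding the existential clause. Your remark that the existential quantifier in $i(u)$ becomes redundant in the presence of the $\pi$-conditions is precisely the content of the paper's step from (\ref{sc2}) to (\ref{sc3}) and back.
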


\begin{proof}
We denote for $u\in U$%
\[
A=\{(\Psi\ast\Phi)^{(\rho,\widetilde{\rho})}((\mu,\delta),u,\cdot)|\mu\in
i_{f}(u),\rho\in\pi_{f}(\mu,u),
\]%
\[
\delta\in i_{h}(\Phi^{\rho}(\mu,u,\cdot)),\widetilde{\rho}\in\pi_{h}%
(\delta,\Phi^{\rho}(\mu,u,\cdot))\},
\]%
\[
B=\{(\Psi\ast\Phi)^{(\rho,\widetilde{\rho})}((\mu,\delta),u,\cdot)|(\mu
,\delta)\in i(u),(\rho,\widetilde{\rho})\in\pi((\mu,\delta),u)\}.
\]
Let $(\Psi\ast\Phi)^{(\rho,\widetilde{\rho})}((\mu,\delta),u,\cdot)\in A$ be
arbitrary, where%
\begin{equation}
\mu\in i_{f}(u),\rho\in\pi_{f}(\mu,u),\delta\in i_{h}(\Phi^{\rho}(\mu
,u,\cdot)),\widetilde{\rho}\in\pi_{h}(\delta,\Phi^{\rho}(\mu,u,\cdot)).
\label{sc2}%
\end{equation}
We infer that%
\begin{equation}
\mu\in i_{f}(u),\exists\rho^{\prime}\in\pi_{f}(\mu,u),\delta\in i_{h}%
(\Phi^{\rho^{\prime}}(\mu,u,\cdot)), \label{sc3}%
\end{equation}%
\[
\rho\in\pi_{f}(\mu,u),\delta\in i_{h}(\Phi^{\rho}(\mu,u,\cdot)),\widetilde
{\rho}\in\pi_{h}(\delta,\Phi^{\rho}(\mu,u,\cdot))
\]
holds, from (\ref{sc4}), (\ref{sc5}) we get%
\begin{equation}
(\mu,\delta)\in i(u),(\rho,\widetilde{\rho})\in\pi((\mu,\delta),u),
\label{sc9}%
\end{equation}
thus $(\Psi\ast\Phi)^{(\rho,\widetilde{\rho})}((\mu,\delta),u,\cdot)\in B$ and
finally $A\subset B.$

Conversely, let $(\Psi\ast\Phi)^{(\rho,\widetilde{\rho})}((\mu,\delta
),u,\cdot)\in B$ be arbitrary, with (\ref{sc9}) fulfilled, wherefrom we get
that (\ref{sc3}) is true. Then (\ref{sc2}) holds, meaning that $B\subset A.$

The statement of the Theorem is proved.
\end{proof}

\begin{theorem}
\label{The22}We have that $\forall u\in U,$%
\[
(h\ast f)(u)=\{(\Psi\ast\Phi)^{(\rho,\widetilde{\rho})}((\mu,\delta
),u,\cdot)|(\mu,\delta)\in i(u),(\rho,\widetilde{\rho})\in\pi((\mu
,\delta),u)\},
\]
i.e. $h\ast f$ is regular generated by $\Psi\ast\Phi,$ $i=i_{h\ast f}$ and
$\pi=\pi_{h\ast f}.$
\end{theorem}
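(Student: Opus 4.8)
The plan is to unfold $(h\ast f)(u)$ directly from its definition and then reorganize the resulting index set by invoking the two preceding lemmas, so that essentially no new computation is required; all the genuine work has already been packaged into Lemmas \ref{The6} and \ref{The8}.

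First I would start from (\ref{sc6}), writing $(h\ast f)(u)=\{(x,y)\mid x\in f(u),\,y\in h(x)\}$. Substituting the orbit representation (\ref{sc7}) of $f(u)$, every $x\in f(u)$ has the form $x=\Phi^{\rho}(\mu,u,\cdot)$ with $\mu\in i_{f}(u)$ and $\rho\in\pi_{f}(\mu,u)$. Because of the standing hypothesis $f(u)\subset X$, such an $x$ lies in $X$, so (\ref{sc8}) applies to it and each $y\in h(x)$ can be written as $y=\Psi^{\widetilde{\rho}}(\delta,x,\cdot)$ with $\delta\in i_{h}(x)$ and $\widetilde{\rho}\in\pi_{h}(\delta,x)$. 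Collecting these facts, $(h\ast f)(u)$ becomes the set of pairs
\[
(\Phi^{\rho}(\mu,u,\cdot),\Psi^{\widetilde{\rho}}(\delta,\Phi^{\rho}(\mu,u,\cdot),\cdot))
\]
indexed by $\mu\in i_{f}(u)$, $\rho\in\pi_{f}(\mu,u)$, $\delta\in i_{h}(\Phi^{\rho}(\mu,u,\cdot))$ and $\widetilde{\rho}\in\pi_{h}(\delta,\Phi^{\rho}(\mu,u,\cdot))$.

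Next I would apply Lemma \ref{The6} pointwise in $t$: it identifies each such pair with the single orbit $(\Psi\ast\Phi)^{(\rho,\widetilde{\rho})}((\mu,\delta),u,\cdot)$ of the composite generator. This rewrites $(h\ast f)(u)$ as exactly the set $A$ appearing in Lemma \ref{The8}, namely the family of orbits of $\Psi\ast\Phi$ indexed by the chained conditions on $i_{f},\pi_{f},i_{h},\pi_{h}$. I would then invoke Lemma \ref{The8}, which asserts $A=B$, to replace the chained index set by the conditions $(\mu,\delta)\in i(u)$ and $(\rho,\widetilde{\rho})\in\pi((\mu,\delta),u)$ coming from (\ref{sc4}) and (\ref{sc5}). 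This produces precisely the right-hand side of the statement, whence $h\ast f$ is regular with generator $\Psi\ast\Phi$, initial state function $i=i_{h\ast f}$ and computation function $\pi=\pi_{h\ast f}$.

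I do not expect a genuine obstacle here: the two real difficulties—converting a Cartesian pair of orbits into one composite orbit, and the equivalence between the chained and the packaged index descriptions—are exactly what Lemmas \ref{The6} and \ref{The8} dispose of. The only point demanding care is to confirm, at the substitution step, that $\Phi^{\rho}(\mu,u,\cdot)\in X$, so that $i_{h}$, $\pi_{h}$ and $h$ may legitimately be applied to it; this is guaranteed by the hypothesis $f(u)\subset X$, which must therefore be cited explicitly rather than left implicit.
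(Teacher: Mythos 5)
Your proposal is correct and follows exactly the same route as the paper's proof: unfold $(h\ast f)(u)$ via (\ref{sc6}), substitute the orbit representations (\ref{sc7}) and (\ref{sc8}), then apply Lemma \ref{The6} followed by Lemma \ref{The8}. Your explicit remark that the hypothesis $f(u)\subset X$ is what licenses applying $i_{h}$, $\pi_{h}$ and $h$ to $\Phi^{\rho}(\mu,u,\cdot)$ is a small but worthwhile addition that the paper leaves implicit.
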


\begin{proof}
Let $u\in U$ be arbitrary. We infer:%
\[
(h\ast f)(u)\overset{(\ref{sc6})}{=}\{(x,y)|x\in f(u),y\in h(x)\}
\]%
\[
\overset{(\ref{sc7}),(\ref{sc8})}{=}\{(\Phi^{\rho}(\mu,u,\cdot),\Psi
^{\widetilde{\rho}}(\delta,\Phi^{\rho}(\mu,u,\cdot),\cdot)|\mu\in
i_{f}(u),\rho\in\pi_{f}(\mu,u),
\]%
\[
\delta\in i_{h}(\Phi^{\rho}(\mu,u,\cdot)),\widetilde{\rho}\in\pi_{h}%
(\delta,\Phi^{\rho}(\mu,u,\cdot))\}
\]%
\[
\overset{Lemma\;\ref{The6}}{=}\{(\Psi\ast\Phi)^{(\rho,\widetilde{\rho})}%
((\mu,\delta),u,\cdot)|\mu\in i_{f}(u),\rho\in\pi_{f}(\mu,u),
\]%
\[
\delta\in i_{h}(\Phi^{\rho}(\mu,u,\cdot)),\widetilde{\rho}\in\pi_{h}%
(\delta,\Phi^{\rho}(\mu,u,\cdot))\}
\]%
\[
\overset{Lemma\;\ref{The8}}{=}\{(\Psi\ast\Phi)^{(\rho,\widetilde{\rho})}%
((\mu,\delta),u,\cdot)|(\mu,\delta)\in i(u),(\rho,\widetilde{\rho})\in\pi
((\mu,\delta),u)\}.
\]

\end{proof}

\end{document}